\title{A Generic Approach to Accelerating Belief Propagation based Incomplete Algorithms for DCOPs via A Branch-and-Bound Technique}
\author{Ziyu Chen,\textsuperscript{\rm 1}Xingqiong Jiang,\textsuperscript{\rm 1}Yanchen Deng\textsuperscript{\rm 2,}\thanks{Corresponding author.}\and Dingding Chen\textsuperscript{\rm 1}\and Zhongshi He\textsuperscript{\rm 1}\\
	College of Computer Science, Chongqing University, Chongqing, China\\
	\textsuperscript{\rm 1} 
	\{chenziyu, jxq, dingding, zshe\}@cqu.edu.cn,
	\textsuperscript{\rm 2}
	dyc941126@126.com
}
\begin{document}
\maketitle
\begin{abstract}
	Belief propagation approaches, such as Max-Sum and its variants, are important methods to solve large-scale Distributed Constraint Optimization Problems (DCOPs). However, for problems with $n$-ary constraints, these algorithms face a huge challenge since their computational complexity scales exponentially with the number of variables a function holds. In this paper, we present a generic and easy-to-use method based on a branch-and-bound technique to solve the issue, called Function Decomposing and State Pruning (FDSP). We theoretically prove that FDSP can provide monotonically non-increasing upper bounds and speed up belief propagation based incomplete DCOP algorithms without an effect on solution quality. Also, our empirically evaluation indicates that FDSP can reduce 97\% of the search space at least and effectively accelerate Max-Sum, compared with the state-of-the-art.  
\end{abstract}

\section{Introduction}
Distributed Constraint Optimization Problems (DCOPs) which require agents to coordinate their decisions to optimize a global objective, are a fundamental framework for modeling multi-agent coordination in multi-agent systems \cite{hirayama1997distributed}. Thus, DCOPs are widely deployed in some real world coordination tasks including meeting scheduling \cite{Enembreck2012Distributed}, sensor networks \cite{Farinelli2014Agent}, power networks \cite{fioretto2017distributed}, etc. 

Algorithms for DCOPs can be classified into two categories: complete and incomplete, according to whether they guarantee to find the optimal solution. Complete algorithms \cite{hirayama1997distributed,modi2005adopt,petcu2005scalable,Yeoh2008BnB,vinyals2009generalizing,gershman2009asynchronous} can get the optimal solutions but incur exponential communication or computation overheads since DCOPs are NP-hard. In contrast, incomplete algorithms \cite{Maheswaran2004Distributed,arshad2004distributed,Zhang2005Distributed,ottens2012duct,nguyen2013distributed,okamoto2016distributed} trade accuracy for computation time and memory so that they can be applied to large-scale problems. As a kind of incomplete algorithms based on belief propagation, Max-Sum \cite{Farinelli2008Decentralised} and its variants \cite{rogers2011bounded,zivan2012max,chen2017iterative} have drawn a lot of attention since they can easily be deployed to any DCOP setting. Moreover, they can explicitly handle $n$-ary constraints and more variables per agent \cite{Cerquides2014A}. In more detail, agents in Max-Sum propagate and accumulate beliefs through the whole factor graph. And each agent only holds its belief about the utility for each possible assignment and continuously updates its belief based on the messages received from its neighbors. 

In spite of many advantages of belief propagation approaches, they suffer from a huge challenge in scalability. Specifically, they perform maximization operations repeatedly to locally accumulate beliefs for the involved variables, given the local utility function and a set of incoming messages. The computation complexity of this step grows exponentially as the number of constraint arities. In other words, when a constraint function holds $n$ variables and the domain size of each variable is $d$, Max-Sum needs to perform $d^n$ maximization operations to yield the best assignment for each variable.

To address the issue, two kinds of methods were proposed to improve the scalability of belief propagation approaches. The first kind is the algorithms based on a branch-and-bound technique including BnB-MS \cite{Stranders2009Decentralised} and BnB-FMS \cite{Macarthur2011A} which both consist of a preprocessing phase and a pruning phase. In the preprocessing phase, the two algorithms use localized message-passing to simplify DCOPs. Specifically, BnB-MS reduces the number of moves that each agent needs to consider in coordinating mobile sensors while BnB-FMS removes tasks that an agent should never perform in dynamic task allocations. In the pruning phase, both algorithms reduce the search space using a branch-and-bound technique to speed up maximization operations. Unfortunately, these algorithms require to exchange a lot of messages in their preprocessing phases. Moreover, the bounds in these algorithms are computed by either brute force or domain-specific knowledge, which limits their applicability.

The second kind of approaches is sorting based, such as G-FBP\cite{Kim2013Improved} and GDP\cite{khan2018generic}, which is applicable to all DCOP settings. G-FBP uses partially sorted lists to adapt FBP. Specifically, it selects and sorts the top $cd^{\frac{n-1}{2}}$ values of the search space, presuming that the maximum value can be found from the selected range. Here, $c$ is a constant. However, G-FBP will incur additional computation once the maximum value cannot be found within the selected range. Different from G-FBP, the main idea of GDP is to explore only the rows that can cover the differences between the sum of the maximal utility of each message and the message utility corresponding to the assignment that produces the largest local utility. Thus, GDP needs to sort the local utilities of all function-nodes independently by each assignment of each variable in the preprocessing phase and $\mathcal{V}_i$ is the sorted result of each assignment $i$. Then, GDP returns a pruned range $[p,q]$ or $[p,q)$ according to whether $q==p-t$, where $p=\max(\mathcal{V}_i)$, $q=\max_c\{c\in \mathcal{V}_i: c\le (p-t)\}$ and $t=m-b$. Here, $m$ is the summation of the maximum value for each received message from other variable-nodes and $b$ is the summation for the corresponding values of $p$ from the incoming messages of a function-node. However, GDP needs additional time to perform sorting operations in the preprocessing phase. More importantly, GDP is an one-shot pruning procedure that cannot use the learned experience from the assignment combinations explored to dynamically prune the search space.   

Given the background, we devote to develop a generic and fast method for belief propagation based on a branch-and-bound technique, called Function Decomposing and State Pruning (FDSP). In more detail, we propose a domain-independent approach based on dynamic programming to effectively evaluate the upper bound of a given partial assignment, which overcomes the aforementioned drawbacks of BnB-MS and BnB-FMS. We further enforce the upper bounds by exploiting the fact that the assignment of the target variable is given. Finally, we prune the search space whenever the upper bound of a partial assignment is no greater than the best lower bound explored so far. The experimental results show the effectiveness of FDSP which can prune at least 97\% of the search space when solving complex problems. 

\section{Background}
\subsection{Distributed Constraint Optimization Problems}
A distributed constraint optimization problem can be represented by a tuple $\left\langle A,X,D,F \right\rangle$ such that:
\begin{itemize}
	\item $A=\{a_1,a_2,\dots,a_h\}$ is a set of agents.
	\item $X=\{x_1,x_2,\dots,x_q\}$ is a set of variables.
	\item $D=\{D_1,D_2,\dots,D_q\}$ is a set of finite and discrete domains, variable $x_i$ taking an assignment value in $D_i$.
	\item $F=\{F_1,F_2,\dots,F_r\}$ is a set of constraints, where each constraint $F_k: \mathbf{x_k}\rightarrow \mathbb{R}^{+}$ denotes how much utility is assigned to each possible combination of assignments of the involved variables $\mathbf{x_k}\subseteq X$. 
\end{itemize}
Thus, a constraint function $F_k(\mathbf{x_k})$ denotes the utility for each possible assignment combination of the variables in $\mathbf{x_k}$, $n=|\mathbf{x_k}|$ represents the arity of $F_k$, and $d=|D_i|$ denotes the domain size of variable $x_i$. Note that the variables in $\mathbf{x_k}$ are ordered according to their own indices, where a variable $\mathbf{x_{k,i}}$ is ordered before a variable $\mathbf{x_{k,j}}$ if $i<j$.     

Given this, the goal for a DCOP is to find the joint variable assignment $X^*$ such that a given global objective function is maximal. Generally, the objective function is described as the sum over $F$: 
$$
X^*=\mathop{\arg\max}_{X}\sum_{F_k(\mathbf{x_k})\in F,\mathbf{x_k}\subseteq X}F_k(\mathbf{x_k}) 
$$
\begin{figure}
	\centering
	\includegraphics[scale=0.7]{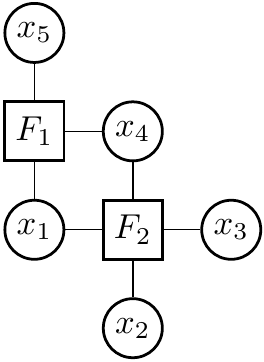}
	\caption{A DCOP instance using a factor graph}
\end{figure}
\subsection{Max-Sum Algorithm}
As a belief propagation approach, Max-Sum is a message-passing inference-based algorithm operating on factor graphs which comprise variable-nodes and function-nodes. Function-nodes which represent the constraints in a DCOP are connected to variable-nodes they depend on, while variable-nodes which represent the variables in a DCOP are connected to function-nodes they are involved in. As shown in Fig. 1, $F_1$ and $F_2$ are two function-nodes, $x_1$, $x_2$, $x_3$, $x_4$ and $x_5$ are variable-nodes, where $x_1$, $x_2$, $x_3$ and $x_4$ are connected to $F_2$, and $x_1$, $x_4$ and $x_5$ are connected to $F_1$. Here, $F_1$ (i.e., $F_1(\mathbf{x_1})$, where $\mathbf{x_1}=\{x_1,x_4,x_5\}$) is a $3$-ary constraint and $F_2$ (i.e., $F_2(\mathbf{x_2})$, where $\mathbf{x_2}=\{x_1,x_2,x_3,x_4\}$) is a $4$-ary one.  

In Max-Sum, beliefs are propagated and accumulated through the whole factor graph via the messages exchanged between variable-nodes and function-nodes. The message from a variable-node $x_i$ to a function-node $F_k\left(\mathbf{x_k}\right)$, called the query message. It is defined by
\begin{equation}
Q_{x_i\rightarrow F_k}\left(x_i\right)=\alpha_{ik}+\sum_{F_j\in N_i\backslash F_k}R_{F_j\rightarrow x_i}(x_i)
\end{equation}
where $\alpha_{ik}$ is a scalar set such that $\sum_{x_i}Q_{x_i\rightarrow F_k}\left(x_i\right) = 0$, $x_i\in \mathbf{x_k}$ and $N_i\backslash F_k$ is a set of neighbors of $x_i$ except the target function-node $F_k$. The response message sent from a function-node $F_k\left(\mathbf{x_k}\right)$ to a variable-node $x_i$ is given by
\begin{equation}
\resizebox{.9\hsize}{!}{$
	R_{F_k\rightarrow x_i}(x_i)=\max\limits_{\mathbf{x_k}\backslash x_i}\left(F_k\left(\mathbf{x_k}\right)+\sum\limits_{x_j\in \mathbf{x_k}\backslash x_i}Q_{x_j\rightarrow F_k}\left(x_j\right)\right) \label{con:e2}$}
\end{equation}

When a variable-node $x_i$ makes its decision, it first accumulates the belief for each possible assignment from all messages it receives. Then, it selects a value to maximize the total utilities. The procedure can be formalized by:
\begin{equation}
x_i^*=\mathop{\arg\max}_{x_i}\sum_{F_k\in N_i}R_{F_k\rightarrow x_i}(x_i)
\end{equation} 
\begin{figure}[]
	\begin{minipage}[t]{0.55\linewidth}
		\includegraphics[scale=0.7]{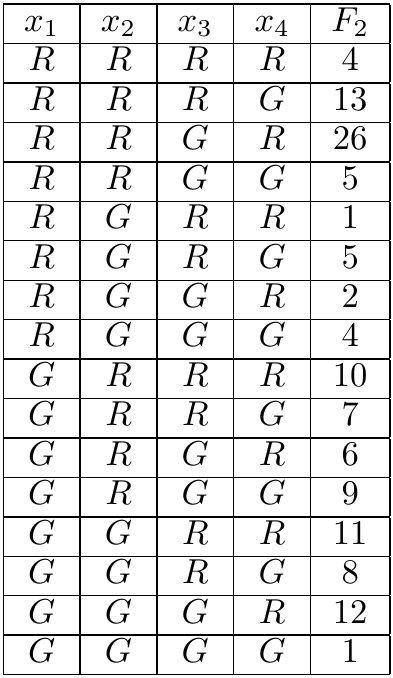}
		\\
		\centering
		(a)
	\end{minipage}%
	\begin{minipage}[t]{0.45\linewidth}
		\includegraphics[scale=0.7]{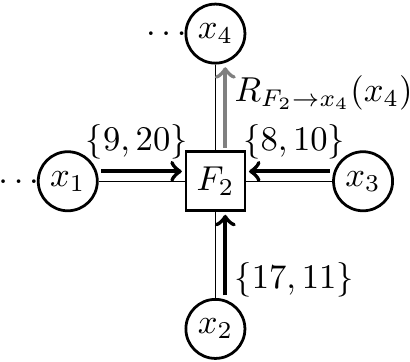}
		\\
		\centering
		(b)
	\end{minipage}%
	\caption{The utility matrix and message exchange for $F_2$} 
\end{figure}
In this paper, we use the term "target variable" to denote the destination variable-node of the outgoing message being computed, and $\mathcal{M}_{\mathbf{x_{k,i}}}$ represents the message from variable $\mathbf{x_{k,i}}$.   
\section{Proposed Method}
\subsection{Motivation}
The maximization operation in Eq. (\ref{con:e2}) is the most computationally expensive operation in Max-Sum. For a $n$-ary DCOP, the complexity of computing the response message is $O(d^n)$. Take Fig. 2 as an example. Assume that each variable takes a value in $\{R,G\}$ and the function-node $F_2$ has received the messages $\mathcal{M}_{x_1}=\{9,20\}$, $\mathcal{M}_{x_2}=\{17,11\}$ and $\mathcal{M}_{x_3}=\{8,10\}$ from $x_1$, $x_2$ and $x_3$, respectively. Then, $F_2$ requires $d^n=2^4=16$ operations to generate the message $R_{F_2\rightarrow x_4}(x_4)$ since its domain size $d=2$ and arity $n=4$. Obviously, the complexity of this step grows exponentially as $d$ and $n$ scale up. Therefore, this is a huge challenge for scalability of belief propagation algorithms.      

As mentioned earlier, some efforts have been made to optimize this maximization operation. Nevertheless, the improved algorithms based on a branch-and-bound technique including BnB-MS and BnB-FMS require a number of messages to be passed in the preprocessing phase. And, these algorithms were proposed for the exact application, which makes it difficult to directly solve general DCOPs. Besides, the algorithms based on sorting, such as G-FBP and GDP, suffer from some drawbacks although they are generic. Specifically, G-FBP cannot guarantee that the maximum value can be found in the selected range, which can lead to a complete traverse to all the possible combinations in the worst case, while GDP requires sorting for each value in the domain of each variable in the preprocessing phase, which makes its use prohibitively expensive. Additionally, GDP cannot use the learned knowledge from the combinations explored to dynamically prune the search space. In other words, GDP is actually an one-shot pruning method. Taking Fig. 2 for example, according to GDP, the local utilities of $F_2$  are sorted independently by each value of the domain. When computing the pruned range of value $G$, there are $\mathcal{V}_G=\{13,9,8,7,5,5,4,1\}$, $p=\max(\mathcal{V}_G)=13$ and the base case $t=(20+17+10)-(9+17+8)=13$. Hence, $p-t=0$. Accordingly, GDP returns a fixed pruned range $[13,1]$ for value $G$. Obviously, the pruned range contains the entire search space of value $G$, and cannot be reduced in the subsequent search process.              

Under such circumstances, we propose a generic, fast and easy-to-use approach based on a branch-and-bound technique, called FDSP that can use the learned experience from the combinations explored to dynamically prune the search space.      
\subsection{FDSP} 
FDSP generally consists of two components: estimation to provide upper bounds and pruning to reduce the search space. To provide the optimal upper bound for a partial assignment, the estimation must return the upper bounds for both the local function and the incoming messages. FDSP computes the function estimations in the preprocessing phase, called Function Decomposing (FD), while the message estimations are (re)constructed once the messages have changed. Pruning is implemented by a procedure called State Pruning (SP) which is based on a branch and bound technique. That is, the algorithm does not expand any partial assignment whose upper bound is less than the known lower bound. FDSP can be easily applied to any belief propagation based incomplete algorithms to deal with DCOPs with $n$-ary constraints. 
\subsubsection{Function Decomposing} serves in a preprocessing phase to compute the function estimation for each variable of a function-node $F_k(\mathbf{x_k})$. Given a partial assignment $PA|_{\mathbf{x_{k,1}}}^{\mathbf{x_{k,i}}}$ to variables $\{\mathbf{x_{k,w}}\in \mathbf{x_k}|1\le w\le i\}$, the upper bound of the local function is maximization of $F_k(\mathbf{x_k})$ over the remaining unassigned variables. That is
\begin{equation}
\resizebox{.9\hsize}{!}{$FunEst_{\mathbf{x_{k,i}}}(PA|_{\mathbf{x_{k,1}}}^{\mathbf{x_{k,i}}})=\max\limits_{z=\{\mathbf{x_{k,j}}|j>i\}}F_k(PA|_{\mathbf{x}_{\mathbf{k,1}}}^{\mathbf{x}_{\mathbf{k,i}}},z)$} \label{con:e4}
\end{equation}
Here, $FunEst_{\mathbf{x_{k,i}}}(\boldsymbol{\cdot})$ is the \textit{uninformed} function estimation for the $i$-th variable $\mathbf{x_{k,i}}$ in $\mathbf{x_k}$, which provides optimistic upper bounds on the utilities of the subsequent search spaces of $PA|_{\mathbf{x_{k,1}}}^{\mathbf{x_{k,i}}}$. Macarthur et al. \shortcite{Macarthur2011A} tried to compute the estimation by using brute force, which incurs exponential operations for each partial assignment. They also suggested that the domain characteristics can be used to compute the estimation, which has a limited generalization and cannot guarantee the tightness. In contrast, our proposed FD is an one-shot preprocessing procedure that uses dynamic programming to compute the estimation, which can significantly reduce the computation efforts. Specifically, the estimations are computed recursively according to Eq. (\ref{con:e5}). 
\begin{equation}
\resizebox{.9\hsize}{!}{$FunEst_{\mathbf{x}_{\mathbf{k,i}}}=\begin{cases}
	F_k\left( \mathbf{x}_{\mathbf{k}} \right) \ \ \ \ \ \ \ \ \ \ \ \ \ \ \ \ \ \ \ \ \ \ \ \ \ \ i=n\\
	\underset{\mathbf{x}_{\mathbf{k,i}+\mathbf{1}}}{\max}\ FunEst_{\mathbf{x}_{\mathbf{k,i}+1}}\ \ \ \ \ \ otherwise\\
	\end{cases}$}\label{con:e5}
\end{equation}
That is, the estimation for a variable is maximization of the one for the next variable. Particularly, the estimation for the last variable is the function itself. Note that, compared to the exponential overhead for each partial assignment in BnB-MS, our proposed FD only requires $O(d^{i+1})$ operations to compute the function estimation for each variable $\mathbf{x_{k,i}}$ in the preprocessing phase.

In fact, the uninformed function estimation $FunEst_{\mathbf{x_{k,i}}}$ for $\mathbf{x_{k,i}}$ could provide a tighter upper bound if we know the assignment of a variable $\mathbf{x_{k,j}}$ such that $j>i$. In this way, we can compute a tight upper bound even if there are many unassigned variables between the last assigned variable and the target variable in a partial assignment. By considering all the possible assignments of each variable ordered after $\mathbf{x_{k,i}}$, we further reinforce the upper bound and propose the \textit{informed} function estimations. Eq. (\ref{com:e6}) gives the formal definition to the informed function estimation for $\mathbf{x_{k,i}}$ in terms of $\mathbf{x_{k,j}}$ where $j>i$.  
\begin{equation}
\resizebox{.9\hsize}{!}{$FunEst_{\mathbf{x}_{\mathbf{k,i}}}^{\mathbf{x}_{\mathbf{k,j}}=v_{k,j}}=\begin{cases}
	FunEst_{\mathbf{x}_{\mathbf{k,j}}}\left( v_{k,j} \right) \ \ \ \ \ \ \ \ \ \ i=j-1\\
	\underset{\mathbf{x}_{\mathbf{k,i}+\mathbf{1}}}{\max}\ FunEst_{\mathbf{x}_{\mathbf{k,i}+1}}^{\mathbf{x}_{\mathbf{k,j}}=v_{k,j}}\ \ otherwise\\
	\end{cases}$}\label{com:e6}
\end{equation}
Similar to the uninformed function estimations, the informed function estimations are computed in a recursive fashion by maximizing the estimation for the next variable. And the estimation for the last variable before $\mathbf{x_{k,j}}$ is the corresponding uninformed function estimation with a given assignment $\mathbf{x_{k,j}}=v_{k,j}$.

Fig. 3 gives the sketch of FD. The procedure begins with computing the uninformed function estimation for each variable in $F_k$ according to Eq. (\ref{con:e5}) from the last one to the first one (line 1-3). Then, for every possible assignment of each variable, we compute the informed function estimation for each variable whose index is smaller than the current variable according to Eq. (\ref{com:e6}) (line 4-6).      
\begin{figure}
	\centering
	\includegraphics[scale=0.8]{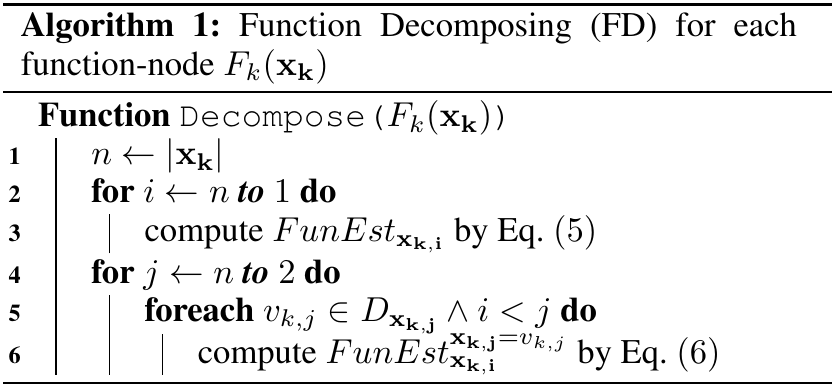}
	\caption{The sketch of function decomposing}
\end{figure}            

Taking Fig. 2 for example, we can compute the uninformed function estimations for variable $x_1$, $x_2$, $x_3$ and $x_4$ as follows:
\begin{tabbing}
	$FunEst_{x_4}=F_2$, $FunEst_{x_3}=\max_{x_4}FunEst_{x_4}$\\
	$FunEst_{x_2}=\max_{x_3}FunEst_{x_3}$\\
	$FunEst_{x_1}=\max_{x_2}FunEst_{x_2}$	
\end{tabbing}
These estimations can provide the upper bounds for the partial assignments with respect to their variables. Besides, the informed function estimations in terms of $x_4=R$ are computed as follows: 
\begin{tabbing}
	$FunEst_{x_3}^{x_4=R}=FunEst_{x_4}(x_4=R)$\\
	$FunEst_{x_2}^{x_4=R}=\max_{x_3}FunEst_{x_3}^{x_4=R}$\\
	$FunEst_{x_1}^{x_4=R}=\max_{x_2}FunEst_{x_2}^{x_4=R}$	
\end{tabbing}

\subsubsection{State Pruning} is geared towards speeding up the computation of the messages from function-nodes to variable-nodes by branch and bound. That is, when the upper bound of a partial assignment is no greater than the lower bound, the search space corresponding to the partial assignment will be discarded. Fig. 4 gives the pseudo code of SP.
\begin{figure}
	\centering
	\includegraphics[scale=0.8]{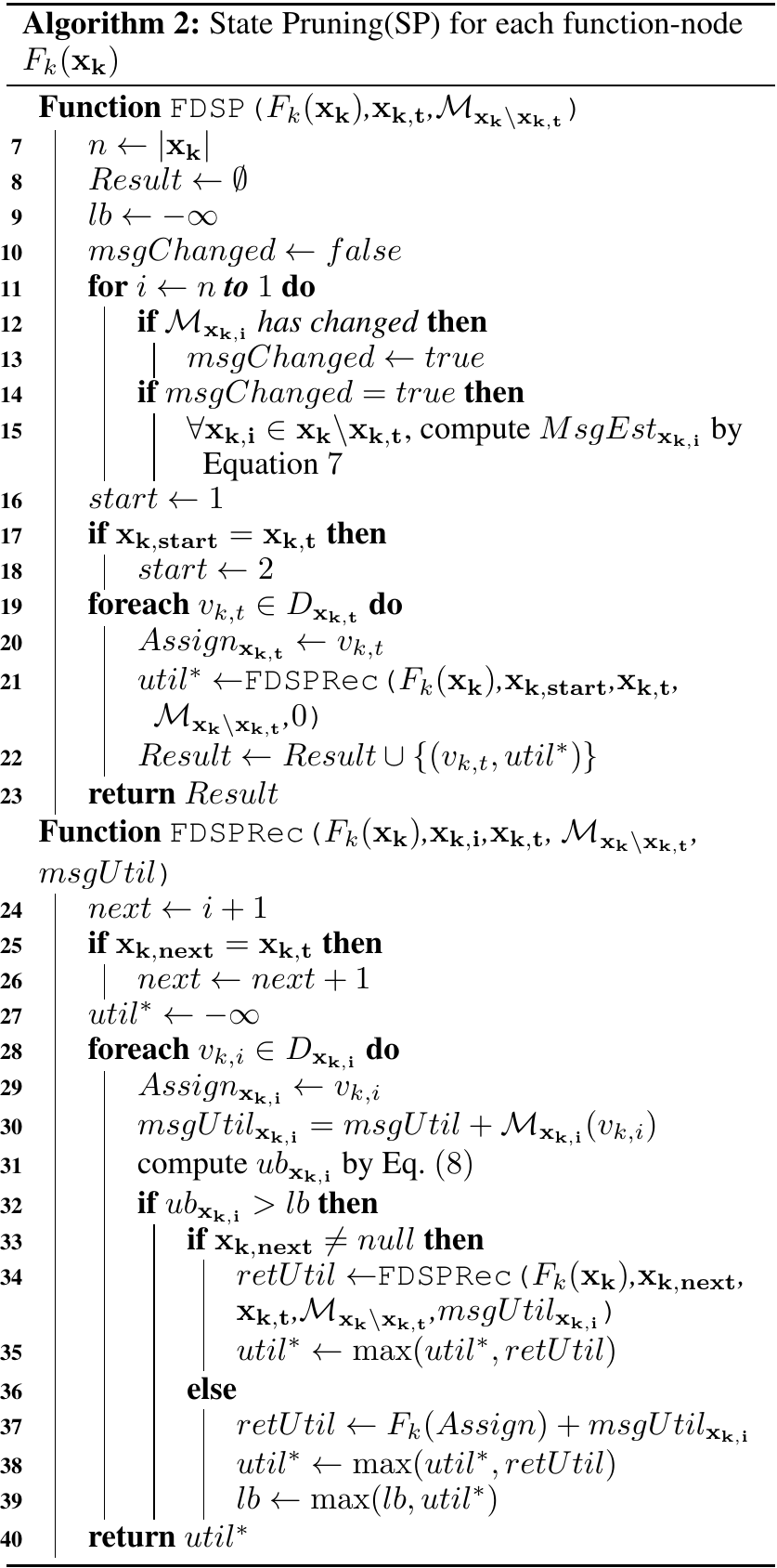}
	\caption{The sketch of state pruning}
\end{figure}

The algorithm begins with calculating the message estimation for each variable $\mathbf{x_{k,i}}\in \mathbf{x_{k}}\backslash \mathbf{x_{k,t}}$, which gives the maximal message utility with respect to all non-target variables after it given these variables unassigned, according to Eq. (\ref{con:e6}) (line 7-15).   
\begin{equation}
MsgEst_{\mathbf{x_{k,i}}}=\sum\limits_{j>i\land j\ne t}\max(\mathcal{M}_{\mathbf{x_{k,j}}})\label{con:e6}
\end{equation}
Here, $\mathbf{x_{k,t}}$ is the target variable and $MsgEst_{\mathbf{x_{k,i}}}$ denotes the upper bounds on the received messages from the variables after $\mathbf{x_{k,i}}$ except $\mathbf{x_{k,t}}$. In order to reduce the unnecessary computation, $F_k$ recomputes the message estimations for each variable ordered before $\mathbf{x_{k,i}}$ only when the message from $\mathbf{x_{k,i}}$ changes. Besides, instead of directly computing message estimations according to Eq. (\ref{con:e6}), $F_k$ further reduces the computation efforts by recursively backing up the maximal message utilities from the last non-target variable to the first one. That is, the message estimation of a variable is the sum of the maximal message utility and the message estimation of the non-target variable next to it. Consider the function-node $F_2$ in Fig. 2(b). When we are computing the message for $x_4$, the message estimations for $x_1$, $x_2$ and $x_3$ are computed as follows:
\begin{tabbing}
	$MsgEst_{x_3}= 0$ \\
	$MsgEst_{x_2}= MsgEst_{x_3}+\max\mathcal{M}_{x_3} = 10 $\\
	$MsgEst_{x_1}= MsgEst_{x_2}+\max\mathcal{M}_{x_2}= 10+17=27$ 
\end{tabbing}    

Then, $F_k$ computes the maximum utility $util^*$ of each assignment of the target variable $\mathbf{x_{k,t}}$ in $D_{\mathbf{x_{k,t}}}$ (line 16-22). Specifically, $F_k$ assigns assignment $v_{k,t}$ to $\mathbf{x_{k,t}}$ according to the order of values in $D_{\mathbf{x_{k,t}}}$ (line 19). Thus, the current partial joint state $Assign=\{\emptyset,\ldots,v_{k,t},\ldots,\emptyset\}$, where $\emptyset$ represents an unassigned variable (line 20). After that, \textbf{FDSPRec} is called for the variable $\mathbf{x_{k,start}}$ which is the first unassigned variable to recursively expand the search space (line 21). Finally, $F_k$ stores $util^*$ to $Result$ when $util^*$ for the current assignment $v_{k,t}$ is returned (line 22). The procedure (line 20 - 22) repeats until all the assignments of $\mathbf{x_{k,t}}$ have been visited.

In \textbf{FDSPRec}, $F_k$ first finds $\mathbf{x_{k,next}}$ that is the unassigned variable next to $\mathbf{x_{k,i}}$ (line 24-26). Note that $\mathbf{x_{k,t}}$ is an assigned variable. Then, $F_k$ decides to expand the search space or update the maximum utility and the lower bound (line 28-39). In more detail, $F_k$ expands the search space by appending the assignment $v_{k,i}$ of $\mathbf{x_{k,i}}$ to the partial joint state (line 29). And then, $F_k$ computes the utilities contributed (i.e., $msgUtil_{\mathbf{x_{k,i}}}$) by the incoming messages with respect to the current $Assign$ by summing the accumulated $msgUtil$ with the entry in terms of $\mathcal{M}_{\mathbf{x_{k,i}}}$ and assignment $v_{k,i}$ (line 30). Then, $F_k$ computes the current upper bound $ub_{\mathbf{x_{k,i}}}$ according to Eq. (\ref{com:e7}) (line 31).      
\begin{equation}
\resizebox{1.04\hsize}{!}{$ub_{\mathbf{x}_{\mathbf{k,i}}}\!=\! \begin{cases}
	msgUtil_{\mathbf{x}_{\mathbf{k,i}}}+MsgEst_{\mathbf{x}_{\mathbf{k,i}}}+FunEst_{\mathbf{x}_{\mathbf{k,i}}}\left( Assign|_{\mathbf{x}_{\mathbf{k,1}}}^{\mathbf{x}_{\mathbf{k,i}}} \right) \ \ \ \ \ \ \ \ \  i>t\\
	msgUtil_{\mathbf{x}_{\mathbf{k,i}}}+MsgEst_{\mathbf{x}_{\mathbf{k,i}}}+FunEst_{\mathbf{x}_{\mathbf{k,i}}}^{\mathbf{x}_{\mathbf{k,t}}=v_{k,t}}\left( Assign|_{\mathbf{x}_{\mathbf{k,1}}}^{\mathbf{x}_{\mathbf{k,i}}} \right) \  i<t\\
	\end{cases}$}  \label{com:e7}
\end{equation}            
Specifically, if $i>t$(i.e., $\mathbf{x_{k,i}}$ is after $\mathbf{x_{k,t}}$), which means the assignments of all variables before $\mathbf{x_{k,t}}$ have been given, the current upper bound of the local function is provided by the uninformed function estimation of variable $\mathbf{x_{k,i}}$. Otherwise, the upper bound is computed by the informed function estimation. In other words, the informed function estimation is used to compute a tight upper bound whenever it is applicable.  

Next, $F_k$ decides whether to expand the search space according to the lower bound (line 32-39). If the upper bound $ub_{\mathbf{x_{k,i}}}$ is greater than the current lower bound $lb$ and $\mathbf{x_{k,i}}$ is not the last variable, the algorithm proceeds by calling the recursive function \textbf{FDSPRec} to expand the search space (line 33-35). Otherwise, the search space corresponding to $Assign$ can be discarded. If $\mathbf{x_{k,i}}$ is the last non-target variable, i.e., the search space has been fully expanded, $F_k$ computes the current utility $retUtil$ of the complete assignment by adding the local utility $F_k(Assign)$ and $msgUtil_{\mathbf{x_{k,i}}}$. Then, $F_k$ updates the maximum utility $util^*$ and the lower bound $lb$ (line 36-39). Finally, when all the assignments of $\mathbf{x_{k,i}}$ have been visited, the algorithm returns $util^*$ (line 40).  
\begin{figure}
	\centering
	\includegraphics[scale=0.67]{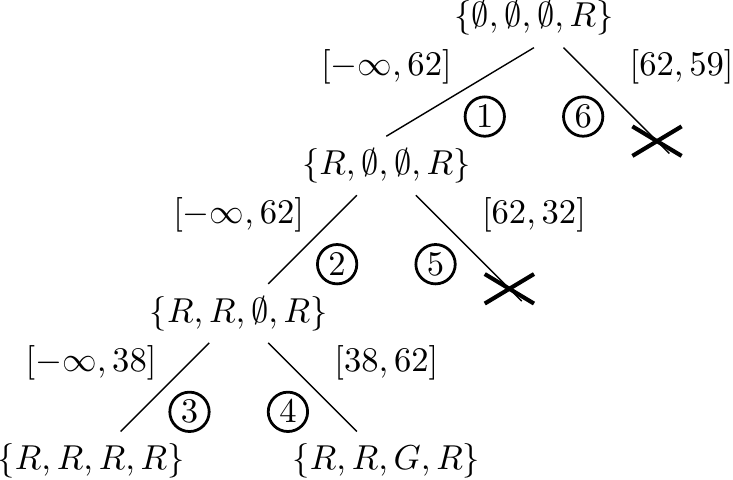}
	\caption{Calculating $R_{F_2\rightarrow x_4}(x_4=R)$ using FDSP}
\end{figure}

Fig. 5 shows an example for calculating the message from function $F_2$ to variable $x_4$ when $x_4=R$ (i.e., $R_{F_2\rightarrow x_4}(x_4=R)$) in Fig. 2, where the numbers with circles represent the trace of SP. Since $x_4$ is fixed to assignment $R$, $F_2$ needs to compute the maximum utility $util^*$ by extending the partial assignment $Assign=\{\emptyset,\emptyset,\emptyset,R\}$. Firstly, $F_2$ visits the first assignment $R$ of $x_1$ and computes $ub_{x_1}=9+27+26=62$ by Eq. (\ref{com:e7}). Then, it expands $Assign=\{R,\emptyset,\emptyset,R\}$ by visiting the first assignment $R$ of $x_2$ since $ub_{x_1}>lb(=-\infty)$. Similarly, $F_2$ expands $Assign=\{R,R,\emptyset,R\}$. At this point, since $ub_{x_3}>lb$ and $Assign$ is fully assigned, $F_2$ computes the utility $retUtil$ of $Assign=\{R,R,R,R\}$: $retUtil=F_2(Assign)+msgUtil_{x_3}=4+(9+17+8)=38$, and updates $util^*=38$ and $lb=38$. Next, $F_2$ visits the next assignment $G$ of $x_3$. Similarly, $F_2$ computes the current upper bound $ub_{x_3}=62$ and the current utility $retUtil=62$ corresponding to $Assign=\{R,R,G,R\}$, and updates $lb$ and $util^*$. After that, $F_2$ visits the second assignment $G$ of $x_2$ since all assignments of $x_3$ have been exhausted. And, $F_2$ computes $ub_{x_2}=32$ which is less than $lb$, so $Assign=\{R,G,\emptyset,R\}$ is discarded. Similarly, $Assign=\{G,\emptyset,\emptyset,R\}$ is also discarded. Finally, $F_2$ finds the maximum utility $util^*=62$, i.e., $R_{F_2\rightarrow x_4}(x_4=R)=62$.                          

As seen from the example, FDSP can prune at least 75\% of the search space during computing the message from the function-node $F_2$ to variable-node $x_4$, where $d=2$ and $n=4$.                        
\subsection{Theoretical Analysis}
In this section, we will theoretically prove that FDSP can speed up belief propagation based incomplete algorithms without an effect on solution quality, i.e., FDSP can provide monotonically non-increasing upper bounds and never prunes the optimal assignment with the maximum utility $util^*$.
\newtheorem{theorem}{Theorem} 
\newtheorem{lemma}{Lemma}
\begin{lemma}
	For a function-node $F_{k}(\textbf{x}_\textbf{k})$ and a given partial assignment $PA$ with $(\textbf{x}_{\textbf{k,t}}=v_{k,t})$ in which $(\textbf{x}_{\textbf{k,i}}=v_{k,i})$ is the last non-target entry, the upper bound of any direct subsequent partial assignment $PA^\prime=PA\cup (\textbf{x}_{\textbf{k,j}}=v_{k,j})$ is at least as low as the one of $PA$, where $\textbf{x}_{\textbf{k,j}}$ is the variable next to $\textbf{x}_{\textbf{k,i}}$ such that $j\ne t$.
\end{lemma}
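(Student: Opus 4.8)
The plan is to prove $ub_{\mathbf{x_{k,j}}} \le ub_{\mathbf{x_{k,i}}}$ directly from Eq.~(\ref{com:e7}), by separating each upper bound into a \emph{message part} ($msgUtil + MsgEst$) and a \emph{function part} (the uninformed or informed function estimation) and bounding the two parts independently. First I would locate $\mathbf{x_{k,j}}$: since it is the variable next to $\mathbf{x_{k,i}}$ with $j\ne t$, either $j=i+1$, or $j=i+2$ in the single situation where $i+1=t$. This splits the argument into three exhaustive cases according to the positions of $i$ and $j$ relative to the target index $t$: (i) $t<i<j$, where both upper bounds use the uninformed estimation; (ii) $i<j<t$, where both use the informed estimation; and (iii) $i=t-1$, $j=t+1$, where the branch crosses the target and the bound switches from the informed estimation to the uninformed one.

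For the message part, in every case the bookkeeping maintained by \textbf{FDSPRec} gives $msgUtil_{\mathbf{x_{k,j}}}=msgUtil_{\mathbf{x_{k,i}}}+\mathcal{M}_{\mathbf{x_{k,j}}}(v_{k,j})$, because the only variable that can lie strictly between $\mathbf{x_{k,i}}$ and $\mathbf{x_{k,j}}$ is the target, whose query message does not enter the response (Eq.~(\ref{con:e2})) and hence contributes nothing to $msgUtil$; meanwhile Eq.~(\ref{con:e6}) telescopes to $MsgEst_{\mathbf{x_{k,i}}}=MsgEst_{\mathbf{x_{k,j}}}+\max(\mathcal{M}_{\mathbf{x_{k,j}}})$, since the index $t$ is excluded consistently from both sums. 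Adding the two, the message part changes by $\mathcal{M}_{\mathbf{x_{k,j}}}(v_{k,j})-\max(\mathcal{M}_{\mathbf{x_{k,j}}})\le 0$.

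For the function part I would unroll the recursions in Eq.~(\ref{con:e5}) and Eq.~(\ref{com:e6}). In cases (i) and (ii), one step of the recursion writes the estimation at $\mathbf{x_{k,i}}$ as a maximum over the assignments of $\mathbf{x_{k,i+1}}=\mathbf{x_{k,j}}$ of the estimation at $\mathbf{x_{k,j}}$, so the function part at $\mathbf{x_{k,j}}$ evaluated on $PA'$ is at most the one at $\mathbf{x_{k,i}}$ evaluated on $PA$. In case (iii), the informed estimation at $\mathbf{x_{k,i}}=\mathbf{x_{k,t-1}}$ equals, by the base case of Eq.~(\ref{com:e6}), the uninformed estimation at the target with $\mathbf{x_{k,t}}=v_{k,t}$ substituted; one further application of Eq.~(\ref{con:e5}) then shows this dominates the uninformed estimation at $\mathbf{x_{k,t+1}}=\mathbf{x_{k,j}}$ evaluated on $PA'$. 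In all three cases the function part does not increase, so summing the two parts yields $ub_{\mathbf{x_{k,j}}}\le ub_{\mathbf{x_{k,i}}}$, which is exactly the statement.

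I expect the main obstacle to be case (iii), the target-crossing step: there the upper bound changes which kind of function estimation it uses, so one must check that gluing the base case of the informed estimation to one step of the uninformed recursion still produces a valid over-estimate of the child's function part, and at the same time that the $msgUtil$ and $MsgEst$ telescoping survives skipping the target index. The other two cases are routine once the recursions are unrolled, and the boundary situations ($t=1$, $t=n$, or $\mathbf{x_{k,i}}$ already being the last variable) either make some of the three cases vacuous or are excluded by the hypothesis that a direct subsequent partial assignment actually exists.
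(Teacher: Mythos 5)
Your proposal is correct and follows essentially the same route as the paper: decompose each upper bound into the message part ($msgUtil+MsgEst$, which telescopes and changes by $\mathcal{M}_{\mathbf{x_{k,j}}}(v_{k,j})-\max(\mathcal{M}_{\mathbf{x_{k,j}}})\le 0$) and the function part (bounded by one unrolling of the estimation recursions), with the same three-way case split on the positions of $i$ and $j$ relative to $t$. The target-crossing case $i=t-1$, $j=t+1$ that you flag as the main obstacle is precisely the one case the paper writes out in full, and your treatment of it (base case of the informed estimation followed by one step of the uninformed recursion) matches the paper's argument.
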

\begin{proof}
	Recall that the upper bound of a given partial assignment is computed according to either the uninformed function estimation or the informed function estimation, depending on the index of the target variable. Thus, three cases need to be discussed: 1) all the upper bounds are computed according to the uninformed function estimations; 2) the upper bound of $PA$ is computed according to the uninformed function estimation, while the one of $PA^\prime$ is computed according to the informed function estimation; 3) all the upper bounds are computed according to the informed function estimations. Here, we only give the prove for case 2) (i.e., $i+1=t$, $t+1=j$) due to the limited space. Similar analysis can be applied to case 1) and 3). 
	$$
	\scriptsize
	\begin{aligned}
	ub_{\textbf{x}_\textbf{k,i}}(PA)&=FunEst_{\textbf{x}_\textbf{k,i}}^{\textbf{x}_\textbf{k,t}=v_{k,t}}(PA)+msgUtil_{\textbf{x}_\textbf{k,i}}+MsgEst_{\textbf{x}_\textbf{k,i}}\\
	&=FunEst_{\textbf{x}_\textbf{k,t}}(PA)+\sum_{l\le i}\mathcal{M}_{\textbf{x}_\textbf{k,l}}(v_{k,l})+\sum_{l>i\land l\ne t}\max(\mathcal{M}_{\textbf{x}_\textbf{k,l}})\\
	&\ge FunEst_{\textbf{x}_\textbf{k,j}}(PA^\prime)+\sum_{l\le i}\mathcal{M}_{\textbf{x}_\textbf{k,l}}(v_{k,l})+\sum_{l>i\land l\ne t}\max(\mathcal{M}_{\textbf{x}_\textbf{k,l}})\\
	&\ge FunEst_{\textbf{x}_\textbf{k,j}}(PA^\prime)+\sum_{l\le j\land l\ne t}\mathcal{M}_{\textbf{x}_\textbf{k,l}}(v_{k,l})+\sum_{l>j}\max(\mathcal{M}_{\textbf{x}_\textbf{k,l}})\\
	&=ub_{\textbf{x}_\textbf{k,j}}(PA^\prime)
	\end{aligned}
	$$
	Here, the second step holds since $i=t-1$. Thus, according to Eq. (\ref{com:e6}) we have $FunEst_{\textbf{x}_\textbf{k,i}}^{\textbf{x}_\textbf{k,t}=v_{k,t}}(PA)=FunEst_{\textbf{x}_\textbf{k,t}}(PA)$. Besides, the third step and the fourth step hold since $FunEst_{\textbf{x}_\textbf{k,t}}(PA)=\max_{\textbf{x}_\textbf{k,j}} FunEst_{\textbf{x}_\textbf{k,j}}(PA,\mathbf{x_{k,j}})\ge FunEst_{\textbf{x}_\textbf{k,j}}(PA^\prime)$(Eq. (\ref{con:e5})) and $\max(\mathcal{M}_{\textbf{x}_\textbf{k,j}})\ge \mathcal{M}_{\textbf{x}_\textbf{k,j}}(v_{k,j})$, respectively.  
	
	Thus the lemma is proved.
\end{proof}

\begin{theorem}
	FDSP does not affect the optimality of Eq. (2).
\end{theorem}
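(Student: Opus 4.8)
The plan is to show that, for each fixed value $v_{k,t}$ of the target variable $\mathbf{x_{k,t}}$, the value $util^*$ returned by State Pruning equals the true maximum in Eq. (\ref{con:e2}) for that value, namely $\max_{\mathbf{x_k}\backslash \mathbf{x_{k,t}}}\bigl(F_k(\mathbf{x_k})+\sum_{l\ne t}\mathcal{M}_{\mathbf{x_{k,l}}}\bigr)$. Since SP does this independently for every $v_{k,t}\in D_{\mathbf{x_{k,t}}}$ (lines 16--22), the whole response message then coincides with the one computed by plain Max-Sum, which is exactly the assertion of the theorem.

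The first and central step is an \emph{admissibility} lemma: for every partial assignment $PA$ reached inside \textbf{FDSPRec} with $\mathbf{x_{k,t}}=v_{k,t}$ and last non-target entry $\mathbf{x_{k,i}}$, the bound $ub_{\mathbf{x_{k,i}}}(PA)$ of Eq. (\ref{com:e7}) is an upper bound on $F_k(A)+\sum_{l\ne t}\mathcal{M}_{\mathbf{x_{k,l}}}(v_{k,l})$ for \emph{every} complete assignment $A$ extending $PA$. I would prove this by bounding the three summands of Eq. (\ref{com:e7}) separately: $msgUtil_{\mathbf{x_{k,i}}}$ is exactly the committed part $\sum_{l\le i}\mathcal{M}_{\mathbf{x_{k,l}}}(v_{k,l})$; $MsgEst_{\mathbf{x_{k,i}}}=\sum_{l>i\land l\ne t}\max(\mathcal{M}_{\mathbf{x_{k,l}}})$ dominates the remaining message contribution $\sum_{l>i\land l\ne t}\mathcal{M}_{\mathbf{x_{k,l}}}(v_{k,l})$ term by term; and unfolding the recursion of Eq. (\ref{con:e5}) (when $i>t$), respectively of Eq. (\ref{com:e6}) together with Eq. (\ref{con:e5}) (when $i<t$), shows that $FunEst_{\mathbf{x_{k,i}}}(PA)$, resp.\ $FunEst_{\mathbf{x_{k,i}}}^{\mathbf{x_{k,t}}=v_{k,t}}(PA)$, equals the maximum of $F_k$ over all completions of $PA$ and hence is $\ge F_k(A)$. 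Adding the three inequalities yields admissibility. I would also record the companion tightness fact: when $PA$ is already a complete assignment, each of the three terms is exact, so $ub_{\mathbf{x_{k,i}}}(PA)=retUtil=F_k(PA)+\sum_{l\ne t}\mathcal{M}_{\mathbf{x_{k,l}}}(v_{k,l})$.

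With admissibility in hand, the rest is the standard branch-and-bound argument. Fix $v_{k,t}$, let $U^*$ be the true maximum of Eq. (\ref{con:e2}) for this value, attained at a complete assignment $A^*$. Since $lb$ and $util^*$ are only ever set to utilities $retUtil$ of complete assignments with $\mathbf{x_{k,t}}=v_{k,t}$ (lines 36--39), we always have $lb\le U^*$ and the returned $util^*\le U^*$. For the reverse inequality, either $lb=U^*$ at some moment, in which case an optimal utility has already been recorded; or $lb<U^*$ throughout the processing of this target value, and then, inducting along the root-to-leaf path of $A^*$ in the DFS tree, every prefix $PA$ of $A^*$ satisfies $ub(PA)\ge U^*>lb$ by admissibility, so the test on line 33 passes and this path is never pruned (and $\mathbf{x_{k,next}}$ remains well defined until the leaf). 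Hence \textbf{FDSPRec} reaches $A^*$, computes $retUtil=U^*$ by the tightness fact, and sets $lb\gets U^*$, a contradiction. Either way $util^*=U^*$; repeating this for every $v_{k,t}\in D_{\mathbf{x_{k,t}}}$ completes the proof.

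The step I expect to be the main obstacle is the admissibility lemma, specifically the bookkeeping when the target variable lies in the interior of the variable ordering: the function part of the bound must be carried by the informed estimation $FunEst_{\mathbf{x_{k,i}}}^{\mathbf{x_{k,t}}=v_{k,t}}$ for positions $i<t$ and by the uninformed $FunEst_{\mathbf{x_{k,i}}}$ for positions $i>t$, and one must verify that in both regimes the recursive definitions (\ref{con:e5})/(\ref{com:e6}) really unfold into a genuine maximum of $F_k$ over the unassigned variables. Lemma 1 does not by itself suffice, as it only compares the upper bounds of consecutive partial assignments rather than relating a bound to an actual completion utility; but the inductive mechanism underlying Lemma 1 (each $FunEst$/$MsgEst$ for a variable is the max, resp.\ max-plus-sum, of the one for the next variable) is exactly what delivers admissibility, and Lemma 1 then additionally guarantees that the pruning bounds are monotonically non-increasing so that no still-promising branch can be discarded. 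Once admissibility is established, everything else is routine.
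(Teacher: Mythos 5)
Your proof is correct, and its overall shape matches the paper's: both arguments reduce to showing that the upper bound computed at every prefix of the optimal assignment dominates the optimal utility, so the $ub>lb$ test can never fail along that path and the optimum is never pruned. The difference lies in which key fact carries the argument. You prove a direct \emph{admissibility} lemma --- $ub_{\mathbf{x_{k,i}}}(PA)$ bounds the utility of every completion of $PA$ --- by bounding the three summands of the upper bound formula (Eq.~(\ref{com:e7})) term by term. The paper instead reaches the same inequality by chaining Lemma~1 (monotonic non-increase of the bound under one-step extension) from the prefix $PA$ all the way down to the complete assignment $Assign^*$, and then observing that at a complete assignment the bound collapses to the exact value, $ub_{\mathbf{x_{k,n}}}(Assign^*)=F_k(Assign^*)+\sum_{l\ne t}\mathcal{M}_{\mathbf{x_{k,l}}}(v_{k,l})=val(Assign^*)$; together these give $ub(PA)\ge ub(Assign^*)=val(Assign^*)$, contradicting $ub(PA)<lb\le val(Assign^*)$. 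So your remark that Lemma~1 ``does not by itself suffice'' is literally true, but the only missing ingredient is the leaf-exactness computation, which the paper supplies explicitly --- you do not need to re-derive admissibility from scratch. What your version buys is a marginally stronger and more self-contained statement (admissibility with respect to all completions, not just the bound at the next prefix), plus an explicit treatment of the tie case $ub=lb$, which both arguments handle correctly because $lb$ is only ever set to the utility of an already-explored complete assignment.
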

\begin{proof}
	Prove by contradiction. For a function-node $F_{k}(\textbf{x}_\textbf{k})$, assume that the optimal assignment of Eq. (2) is $Assign^*$, and the corresponding utility value is $val(Assign^*)$. Assume that FDSP has missed that assignment. Thus, there must exist a partial assignment $PA\subset Assign^*$ such that $ub(PA)<lb \le val(Assign^*)$. According to Lemma 1, the upper bound is monotonically non-increasing, i.e., $ub(PA)\ge ub(Assign^*)$. Note that
	$$
	\scriptsize
	\begin{aligned}
	ub_{\textbf{x}_\textbf{k,n}}(Assign^*)&=FunEst_{\textbf{x}_\textbf{k,n}}(Assign^*)+msgUtil_{\textbf{x}_\textbf{k,n}}+MsgEst_{\textbf{x}_\textbf{k,n}}\\
	&=F_k(Assign^*)+\sum_{l\le n\land l\ne t}\mathcal{M}_{\textbf{x}_{\textbf{k,l}}(v_{k,l})}\\
	&=val(Assign^*)
	\end{aligned}
	$$
	Here, $n=|\mathbf{x_k}|$. Thus, we have $ub(PA)\ge ub(Assign^*)=val(Assign^*)$, which is contradict to the assumption. Therefore, the upper bound of a partial assignment cannot be less than the value of any subsequent full assignment and the optimality is hereby guaranteed.
\end{proof}
\subsection{Complexity Analysis}
Each variable $\mathbf{x_{k,i}}$ needs to compute and stores an uninformed function estimation and $d(n-i)$ informed function estimations in the preprocessing phase. Thus, the time and space of each variable require $O([1+d(n-i)]d^{i+1})$ and $O([1+d(n-i)]d^{i})$, respectively, where the value of $i$ becomes smaller as FD performs. Thus, FDSP in the preprocessing phase needs a small overhead.  

Besides, since each function-node needs to explore the search space with respect to the target variable, the time complexity in the worst case is $O(d^n)$. However, with SP, only the small search space needs to be explored. Therefore, the overall overhead is small. For this point, our empirical evaluation also verifies that FDSP only requires little time to run.       
\section{Empirical Evaluation}
We empirically evaluate the performances of FDSP and GDP which are both applied to Max-Sum on four configurations of $n$-ary random DCOPs. Since BnB-MS and BnB-FMS are not generic algorithms and G-FBP is inferior to GDP \cite{khan2018generic}, we do not include them for comparison. The complexity of a $n$-ary DCOP can be quantified by the number of function-nodes, the average/maximal arity and the domain size \cite{Kim2013Improved,khan2018generic}. In addition to these parameters, we also find the number of variable-nodes can affect the complexity. Intuitively, given the number of function-nodes and the average arity per function-node, the graph density is actually determined by the number of function-nodes. Therefore, we introduce a new parameter called variable tightness (denoted as var\_T ) to depict the complexity from another perspective, which is defined as follows.
$$
\text{var\_T}= 1- \frac{\text{number of variable-nodes}}{\text{total number of arities}} 
$$
It can be concluded that given the function-node number and total arity number, the number of variable-nodes decreases as var\_T increases, which will generate a denser and more complex problem since each variable-node has to connect more function-nodes.

For each configuration other than the first one, we generate sparse factor graphs and dense factor graphs by randomly selecting var\_T from [0.1, 0.5] and (0.5, 0.9], respectively. In the first configuration, we set the number of function-nodes to 100 and the minimal arity to 2, and uniformly select the costs, the domain size and the maximal arity from [1, 100], [2, 10] and [2, 7], respectively. And, var\_T varies from 0.1 to 0.9. In the second one, we vary the maximal arity from 2 to 7. In the third configuration, we vary the number of function-nodes from 10 to 100. In the last one, we set the number of function-nodes to 50 and vary the domain size from 2 to 7. Also, we benchmark Max-Sum\_ADVP+FDSP to demonstrate the generalization of FDSP. To guarantee Max-Sum\_ADVP to converge, we alternate its directions every 100 iterations. All the omitted parameters except var\_T in each configuration are the same as the ones in the first configuration. For each of the setting, we generate 25 random instances and the results are averaged over all instances. The algorithms terminate after 200 iterations for each instance.
\begin{figure}
	\centering
	\includegraphics[scale=0.42]{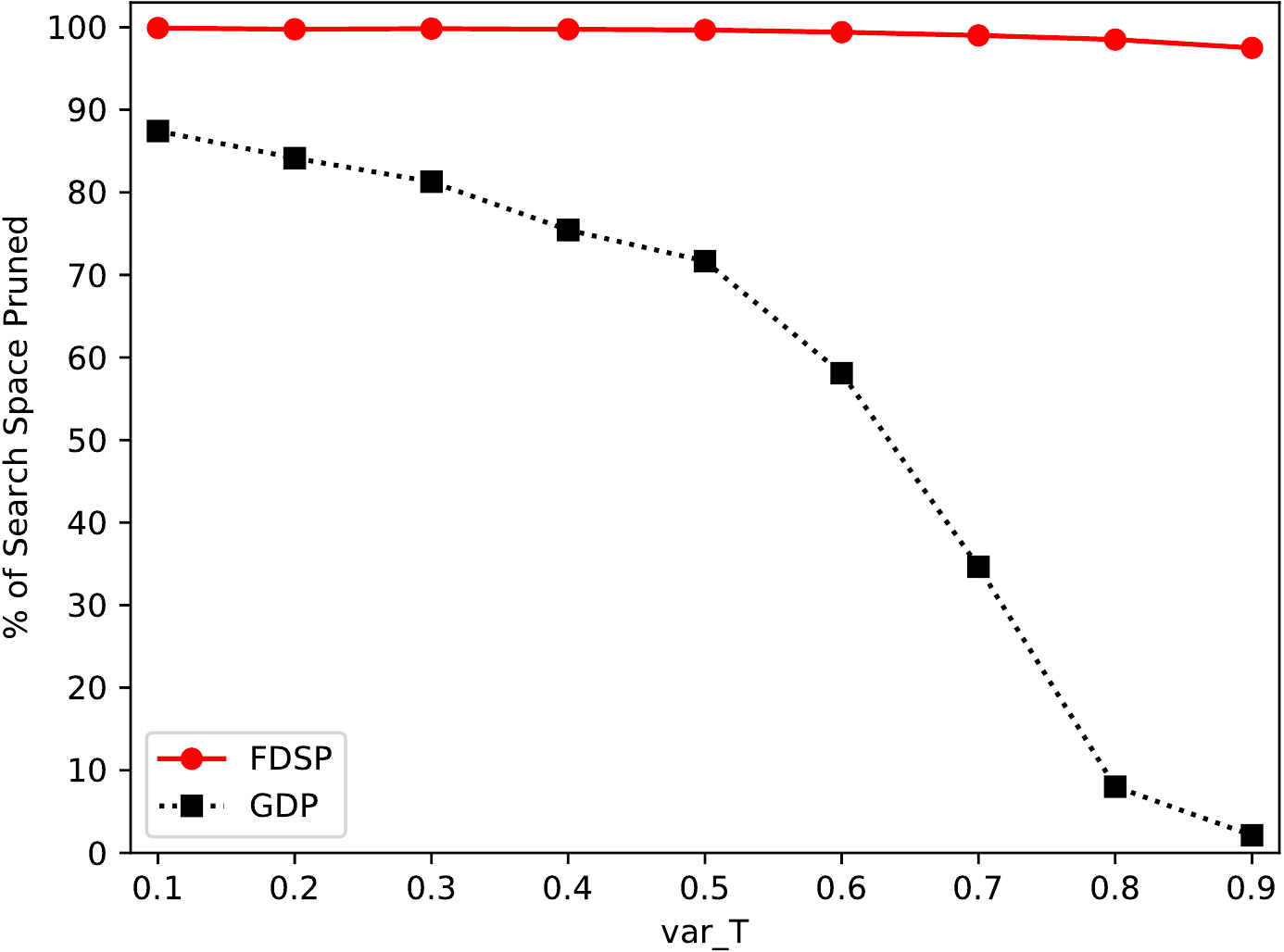}
	\caption{Performance comparison on different var\_T}
\end{figure}

Fig. 6 gives the comparison under different var\_T. It can be observed that FDSP clearly outperforms GDP under different var\_T, and the gap is widen as var\_T grows. Concretely, FDSP can prune at least 97\% of the search space while GDP only prunes at most 87\% of the search space when computing Eq. (2). Moreover, FDSP performs similarly as var\_T grows, which indicates FDSP is less sensitive to the complexity of problems. On the other hand, the performance of GDP decreases as var\_T increases, and GDP performs poorly when solving the  problems with var\_T$>=0.5$. This is because the sum of the difference between the maximal value and the value corresponding to the maximal local utility in each message will increase when the graph density increases as var\_T grows. As a result, GDP provides a large pruned range so as to prune only a small proportion of the search space.                              
\begin{figure}
	\centering
	\includegraphics[scale=0.42]{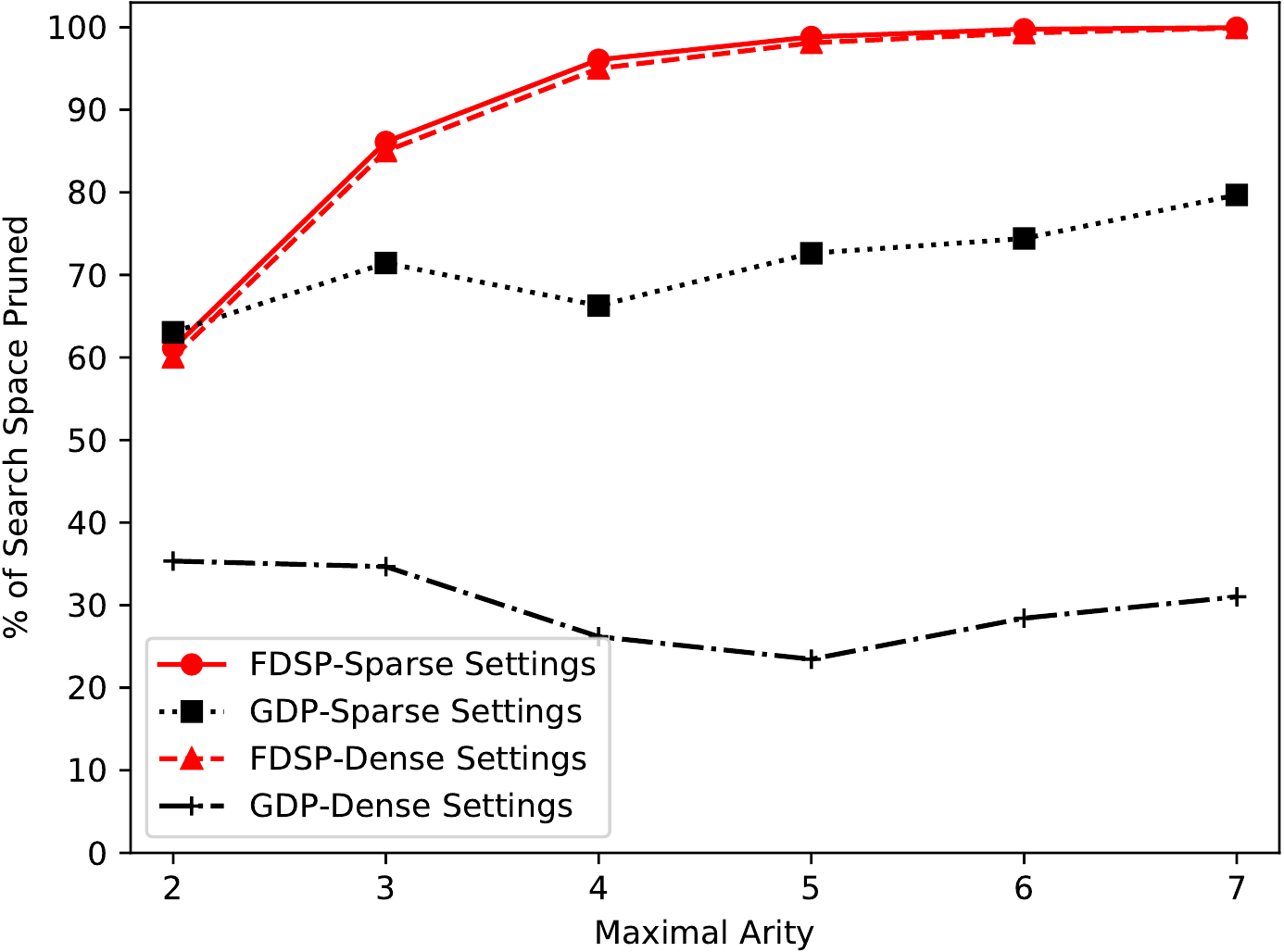}
	\caption{Performance comparison on different arities}
\end{figure}

Fig. 7 shows the performance comparison on different maximal arities. It can be concluded that FDSP outperforms GDP in both sparse and dense factor graphs, especially in dense factor graphs. FDSP prunes around 60\%-99\% of the search space in both sparse and dense factor graphs, while GDP can only prune at most 80\% and 36\%, respectively. That is because FDSP provides tighter bounds to make Max-Sum explore fewer combinations.   
\begin{figure}
	\centering
	\includegraphics[scale=0.42]{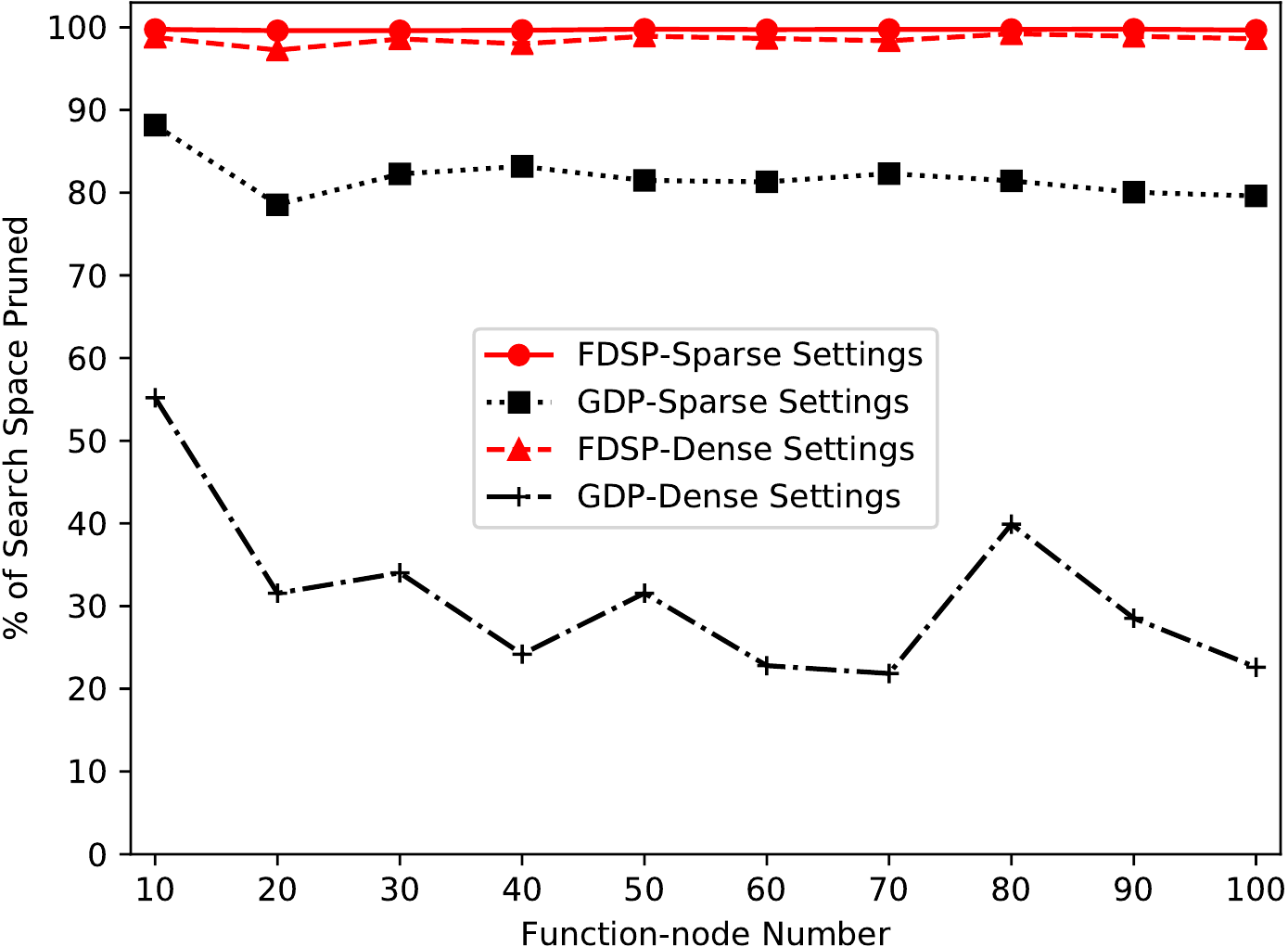}
	\caption{Performance comparison on different function-node numbers}
\end{figure}

Fig. 8 presents the results under different number of function-nodes. Similar to the first configuration, FDSP prunes at least 97\% of the search space in both sparse and dense factor graphs, while GDP can only prune at most 88\% and 55\% of the search space in sparse and dense factor graphs, respectively. This is because GDP is an one-shot pruning procedure and cannot use the learned experience from the assignment combinations explored to dynamically prune the search space.      
\begin{figure}
	\centering
	\includegraphics[scale=0.42]{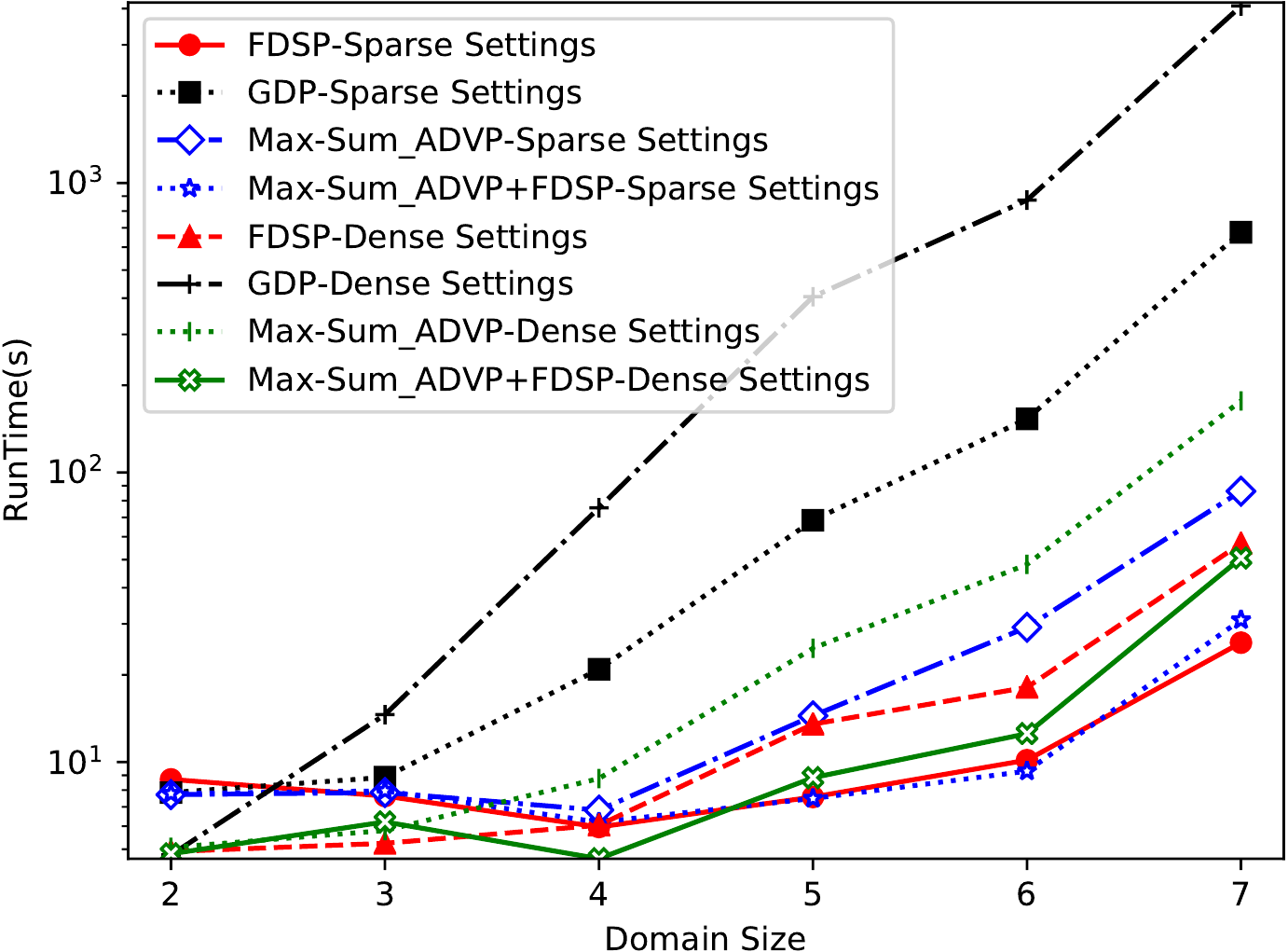}
	\caption{Runtime on different domain sizes}
\end{figure} 

Fig. 9 gives the runtime under different domain sizes. It can be seen that our FDSP exhibits great superiority over GDP and Max-Sum\_ADVP when solving the problems with large domain sizes, which indicates that FDSP can scale up well and only requires few computation efforts. GDP would perform even worse in practice since the runtime presented in Fig. 9 actually does not take sorting, which is quite expensive when the domain size is large, into consideration. Besides, one can easily observe that Max-Sum\_ADVP+FDSP is superior to Max-Sum\_ADVP when solving the problems with large domain sizes in sparse and dense factor graphs, which indicates that FDSP can also effectively accelerate the variants of Max-Sum.       
\section{Conclusion}
In this paper, we propose FDSP, a generic, fast and easy-to-use method based branch and bound, which significantly accelerates belief propagation based incomplete DCOP algorithms. Specifically, we first propose function decomposing (FD) to effectively compute the function estimation, which dramatically reduces the overheads in computing an upper bound of a partial assignment. Then, we further present state pruning (SP) based on branch and bound to reduce the search space. Besides, we theoretically prove that our bounds are monotonically non-increasing during the search process and FDSP never prunes the assignment with the maximum utility. Our experimental results clearly show that FDSP can prune around 97\%-99\% of the search space and only requires little time, especially for the large and complex problems.  
\section{Acknowledgment}
This research is funded by Chongqing Research Program of Basic Research and Frontier Technology (No. cstc2017jcyjAX0030), Fundamental Research Funds for the Central Universities (No. 2018CDXYJSJ0026) and Graduate Research and Innovation Foundation of Chongqing, China (Grant No. CYS18047).
\bibliography{wtfref}
\bibliographystyle{aaai}

\end{document}